\def\@setthanks{\vspace{-\baselineskip}\def\thanks##1{\@par##1\@addpunct.}\thankses}
\newtheorem{theorem}{Theorem}
\newtheorem{assum}{Assumption}[section]
\numberwithin{equation}{section}
\newtheorem{remark}{Remark}
\providecommand{\customgenericname}{}
\newcommand{\newcustomtheorem}[2]{%
  \newenvironment{#1}[1]
  {%
   \renewcommand\customgenericname{#2}%
   \renewcommand\theinnercustomgeneric{##1}%
   \innercustomgeneric
  }
  {\endinnercustomgeneric}
}
\numberwithin{equation}{section}
\begin{document}
\begin{titlepage}

\title{Group-Heterogeneous Changes-in-Changes and Distributional Synthetic Controls\thanks{We are grateful for helpful comments from Joshua Angrist and Simon Lee.}
}
\author{Songnian Chen\thanks{School of Economics, Zhejiang University. Email: \href{mailto:snchen2022@zju.edu.cn}{snchen2022@zju.edu.cn}.}\ \ \ \ \ \  Junlong Feng\thanks{Department of Economics, the Hong Kong University of Science and Technology. Email: \href{mailto:jlfeng@ust.hk}{jlfeng@ust.hk}.}}

\date{February 2026}                                          

\maketitle
\vspace{-3em}
\begin{abstract} 
We develop new changes-in-changes (CIC) and distributional synthetic controls (DSC) types of methods when
there exists group-level heterogeneity. For CIC, we allow individuals to belong to heterogeneous groups, extending \cite{athey2006identification} by finding appropriate control groups that share similar group-level unobserved characteristics to the treatment groups. For DSC, we show that the synthetic control units are not necessarily from the same period as in \cite{gunsilius2023distributional}; they may come from diﬀerent periods in which they have comparable group-level heterogeneity to the treatment group. Implementation of these new methods is briefly discussed.
\\
\vspace{0in}\\
\noindent\textbf{Keywords:} Causal inference, differences-in-differences, synthetic control, group heterogeneity.\\
\noindent\textbf{JEL Codes:} C21, C23, C31, C33\\
\bigskip
\end{abstract}
\end{titlepage}

\section{Introduction}\label{s1}

Individuals in economic data often belong to multiple heterogeneous groups. Group level heterogeneity is a common modeling feature in popular econometric methods, such as differences-in-differences and synthetic controls: In differences-in-differences, many empirical models include group-time fixed effects. In synthetic controls, models usually assume a group level outcome to be affected by latent common factors and group level factor loadings. 
However, despite the rich heterogeneity such group level unobservables capture, methods for nonseparable models, such as changes-in-changes method (CIC) in \cite{athey2006identification} and the distributional synthetic control (DSC) \citep{gunsilius2023distributional}, while focusing on individual level heterogeneity, often do not adequately control for group level unobservables.

In this paper, we introduce a model that allows for both individual level and group level unobservables, thereby extending the CIC and DSC frameworks to multivariate unobservables in the special case where one dimension operates only at the group level. As \cite{torous2024optimal} note, accommodating genuinely multivariate unobservables in a CIC-style setting is difficult; our approach provides a tractable and immediately applicable solution by restricting one factor to be common within groups while preserving the nonseparable structure of \cite{athey2006identification} and \cite{gunsilius2023distributional}. Under different data structures, we derive identification of heterogeneous quantile treatment effects on the treated by extending the methods in \cite{athey2006identification} and \cite{gunsilius2023distributional}. The changes-in-changes approach in \cite{athey2006identification} works with a small number of groups, whereas we are able to construct an appropriate comparison group to accommodate group level heterogeneity by working with a large number of groups. Similarly, for synthetic controls, \cite{gunsilius2023distributional} considers a setting with a small number of time periods,\footnote{The traditional synthetic control setting requires $T_{0}$, the number of pre-treatment periods, to be large.}  but we require a large number of both the pre- and post-treatment periods; we show that the appropriate synthetic control can be constructed to control for group level heterogeneity using possibly different time periods in which they share comparable group heterogeneity, in spirit, similar to the synthetic differences-in-differences approach in \cite{arkhangelsky2021synthetic}, in contrast to \cite{gunsilius2023distributional} where control groups at the same time period are used; we compare these approaches in more detail in Sections \ref{sec.comp.classical}-\ref{sec.comp.dsc}. In general, the presence of large numbers of groups and time periods is common in the differences-in-differences and synthetic control literature; a very incomplete list of examples include \cite{li2020statistical}, \cite{athey2021matrix}, \cite{callaway2021difference}, \cite{ferman2021properties} and \cite{arkhangelsky2024causal}. Our methods bring group-level unobservables into the nonparametric toolkit in a way that is both theoretically clean and empirically implementable with the data structures already common in the literature.

\subsection{The Model}
Suppose there are $G$ groups. In each group $g=1,\ldots,G$, there are $n_{g}$ individuals, denoted by $i=1,\ldots,n_{g}$. There are $T+1$ time periods and two sets of groups $\mathcal{G}^{I}$ and $\mathcal{G}^{N}$, where $\mathcal{G}^{I}\cap \mathcal{G}^{N}=\emptyset$ and $\mathcal{G}^{I}\cup \mathcal{G}^{N}=\{1,\ldots,G\}$. Treatment is assigned to groups in $\mathcal{G}^{I}$ in period $T_{0}+1$, whereas groups in $\mathcal{G}^{N}$ never receive the treatment. For an individual $i$ in group $g$ in time period $t$, her observable outcome and potential outcomes should she receive the treatment or not are denoted by $Y_{igt},Y_{igt}^{I}$ and $Y_{igt}^{N}$, respectively. Denoting the treatment assignment by $I_{igt}$, these outcomes satisfy
\begin{equation}\label{eq.potential}
  Y_{igt}=I_{igt}\cdot Y_{igt}^{I}+(1-I_{igt})\cdot Y_{igt}^{N}.
\end{equation}

Let $U_{igt}$ and $V_{gt}$ be two scalar random variables which econometricians do not observe. We assume that the potential outcome $Y_{igt}^{N}$ satisfies
\begin{equation}\label{eq.model}
  Y_{igt}^{N}=h(U_{igt},V_{gt},t),\forall i,g,t.
\end{equation}

The major difference in our model compared with those in \cite{athey2006identification} and \cite{gunsilius2023distributional} is the inclusion of $V_{gt}$. It allows the model to capture richer heterogeneity. For instance, consider peer effects of school integration \citep{angrist2004does,chetverikov2016iv}. Let $Y$ be the test score of a student. A group $g$ is a grade-school cell. Let $U_{igt}$ be the unobservable learning ability of student $i$ in group $g$ and year $t$. Let $V_{gt}$ be a scalar index of group heterogeneity that captures the shared unobserved characteristics such as teacher quality and classroom environment across grades, schools and years \citep{krueger1999experimental} as a scalar latent factor. If one did not include $V_{gt}$ in the model, the model would imply that, for instance, as long as students of the same grade in the same year but in two different schools have the same level of learning ability, then even if the teachers' quality and classroom environment in these schools were very different, these students would have achieved the same test scores. In contrast, our model avoids this empirically implausible implication by allowing variation in $Y^{N}_{igt}$ due to group heterogeneity. The separable structure captures settings where group-level factors such as teacher quality affect all individuals within a group through a common channel, rather than operating
only through selection on individual unobservables.

With the presence of $V_{gt}$, constructing the counterfactual distribution for units in the treatment group needs to account for group heterogeneity. Specifically, for some $\tau_{U}\in (0,1)$, the $\tau_{U}$-th quantile of the distribution of $Y^{N}_{igt}$ for a fixed group $g$ and time period $t$ depends on (the realization of) $V_{gt}$. Hence, one has to find units or groups in the control group who share comparable level of $V_{gt}$. Ignoring such heterogeneity generally leads to incorrect construction of the counterfactuals.

Our key methodological contribution is to show how large cross-sections of groups (in CIC) or long time series (in DSC) allow us to match explicitly on group-level unobservables $V_{gt}$. This yields new, practical ways to construct counterfactual distributions in the empirically common setting of clustered data with both individual and group heterogeneity.

In what follows, we define the counterfactuals we are interested in and discuss its identification in the settings of differences-in-differences and synthetic controls. We then briefly sketch implementation and data requirements. We collect all the proofs in the Appendix.

\section{Changes-in-Changes with Group Heterogeneity}\label{sec.did}
We first consider a typical two-period differences-in-differences or changes-in-changes setup where $T_{0}=0$ and $T=1$. For a fixed group $g$, a fixed time period $t$, and $k\in\{I,N\}$, let $Y_{gt}^{k}(\tau_{U})$ be the $\tau_{U}$-th quantile of $Y^{k}_{igt}$ within $g$ and $t$. Note that $Y_{gt}^{k}(\tau_{U})$ is a random variable whose randomness solely comes from $V_{gt}$. Let the $\tau_{V}$-th quantile of $Y^{N}_{gt}(\tau_{U})$ over $g$ for a fixed $t$ be $Y^{N}_{It}(\tau_{U},\tau_{V})$ when $g\in\mathcal{G}^{I}$ and by $Y^{N}_{Nt}(\tau_{U},\tau_{V})$ when $g\in\mathcal{G}^{N}$. In the test score example, $Y_{gt}^{N}(\tau_{U},\tau_{V})$ is the potential test score for a particular student whose unobservable learning ability is ranked at $\tau_{U}$ within her/his group, and whose group possesses teacher quality and classroom environment ranked at $\tau_{V}$ among all grade-school groups. By equation \eqref{eq.potential}, $Y^{N}_{kt}(\tau_{U},\tau_{V})$ can be directly identified using the distributions of $Y_{igt}$ for all $k$ and $t$ except for $(k,t)=(I,1)$. In this section, we propose assumptions to identify $Y^{N}_{I1}(\tau_{U},\tau_{V})$.

\begin{assum}\label{assum.did.mono}
The production function $h(u,v,t):S(U)\times S(V)\times\{0,1\}\mapsto \mathbb{R}$ is componentwise strictly increasing in $u$ and $v$ for $t\in\{0,1\}$. 
\end{assum}

\begin{assum}\label{assum.did.ind}
$U_{igt}\perp V_{gt}$ for all $i,g,t$.
\end{assum}

\begin{assum}\label{assum.did.iid}
$U_{igt}$ and $V_{gt}$ are both identically distributed in $i,g,t$ for $i=1,\ldots,N_{g}$, $t=0,1$ and $g\in\mathcal{G}^{I}$ and $g\in\mathcal{G}^{N}$, with quantile functions denoted by $Q_{U_{I}},Q_{U_{N}},Q_{V_{I}}$ and $Q_{V_{N}}$, respectively. 
\end{assum}

Finally, let $S(U_{I}),S(V_{I})$, $S(U_{N})$ and $S(V_{N})$ denote the support of $U_{igt}$ and $V_{gt}$ for $g$ in $\mathcal{G}^{I}$ and $\mathcal{G}^{N}$, respectively. 
\begin{assum}\label{assum.did.support}
$S(U_{I})\subseteq S(U_{N})$ and $S(V_{I})\subseteq S(V_{N})$. 
\end{assum}

Similar to \cite{athey2006identification}, Assumption \ref{assum.did.mono} assumes $h$ to be strictly increasing in the individual heterogeneity. Besides, we also assume strict monotonicity in the scalar group level unobservable. We need these two assumptions to link $h(\cdot,\cdot,t)$ to the cross-group quantile of the within-group quantile of $Y$. These monotonicity assumptions are natural when, for example, $U_{igt}$ and $V_{gt}$ capture individual and group level characteristics that can be ranked, such as learning ability, teacher quality and classroom environment. For example, in the schooling application, a higher learning ability or better teacher quality/classroom environment raises test scores for any fixed level of the other unobservable, which is the economic content of strict monotonicity in both arguments. In particular, for $V_{gt}$, while teacher quality and classroom environment could in principle trade off within a group, the scalar $V_{gt}$ represents their composite effect on the production function, which is the economically relevant object for ranking groups.

 Assumption \ref{assum.did.ind} assumes independence between the two unobservables. It holds if group assignment/sorting is driven by individual unobservables $U_{igt}$ but not by group-level unobservables $V_{gt}$. This is analogous to the standard exogeneity of fixed effects in nonlinear panel models \citep{bai2009panel,moon2015linear}. In the schooling example, the independence assumption rules out perfect sorting of high-ability students into schools with unobservably better teachers; students may sort on observable school quality or their own ability, but teacher quality is assigned largely exogenously (or is hard to observe ex ante). The strength of this assumption is that it separates the quantiles of $U$ and $V$. If sorting on $V_{gt}$ were important, one would need additional instruments or a different modeling approach.

Assumption \ref{assum.did.iid} requires only that the pools of treated and control groups have the same unobservable distributions (e.g., student ability distributions are balanced across treated vs. untreated schools, and teacher-quality distributions are stable). This is weaker than requiring identical groups. Its strength is that it delivers clean quantile matching with large $G$; its limitation is that it could be violated if treatment itself changes group composition. Note that, although Assumption \ref{assum.did.iid} restricts the distributions of the unobservables to be stable over time, the distribution of outcomes can still change across periods because the production function $h(\cdot,\cdot,t)$ itself is allowed to vary with t. 

Together with the restrictions in Theorem \ref{thm.did.id}, Assumptions \ref{assum.did.ind} and \ref{assum.did.iid} are our version of the \textit{parallel trend} condition in the differences-in-differences literature. They imply stable counterfactual distributions within treatment/control groups over time. Note that this naturally holds under the assumptions in \cite{athey2006identification}, where $V_{gt}$ is assumed away and $U_{igt}=U_{ig't}$ for all $g,g'\in\mathcal{G}^{I}$ or $\mathcal{G}^{N}$. Hence, our assumptions do not impose extra restrictions than that classical case under group homogeneity.

\begin{remark}
We can relax Assumption \ref{assum.did.iid} to allow for certain time series heteroscedasticity by allowing the distribution functions of $U_{ig1}$ and $V_{g1}$ to be strictly monotone transformations of $U_{ig0}$ and $V_{g0}$, respectively, as long as such transformations do not vary in $i$ and $g$. This is because these transformations can be without loss of generality absorbed into the last argument of $h$. 
\end{remark}

For fixed $g$ and $t$, let $Y_{gt}(\tau_{U})$ be the $\tau_{U}$-th quantile of $Y_{igt}$ over $i$. This quantity is an observable random variable and equal to $Y_{gt}^{N}(\tau_{U})$ for all $g$ when $t=0$ and for $g\in\mathcal{G}^{N}$ when $t=1$. For such $g$ and $t$, denote the group quantile function and the cumulative distribution function (CDF) of $Y_{gt}(\tau_{U})$ by $Q_{Y_{It}(\tau_{U})}(\cdot)$ and $F_{Y_{It}(\tau_{U})}(\cdot)$ if $g\in\mathcal{G}^{I}$ and by $Q_{Y_{Nt}(\tau_{U})}(\cdot)$ and $F_{Y_{It}(\tau_{U})}(\cdot)$ if $g\in\mathcal{G}^{N}$, respectively; these functions are directly identified in the population. We have the following result.

\begin{theorem}\label{thm.did.id}
Under Assumptions \ref{assum.did.mono}-\ref{assum.did.support}, for any fixed $(\tau_{U}^{*},\tau_{V}^{*})\in (0,1)\times (0,1)$, the following statements are true:
\begin{enumerate}
  \item[(i)] If $Q_{U_{I}}=Q_{U_{N}}$, then $Y_{I1}^{N}(\tau_{U}^{*},\tau_{V}^{*})=Q_{Y_{N1}(\tau_{U}^{*})}(F_{Y_{N0}(\tau_{U})}(Q_{Y_{I0}(\tau_{U})}(\tau_{V}^{*})))$ for any $\tau_{U}\in (0,1)$. 
  \item[(ii)] If $Q_{V_{I}}=Q_{V_{N}}$, then $Y_{I1}^{N}(\tau_{U}^{*},\tau_{V}^{*})=Q_{Y_{N1}(\tau_{U}')}(\tau_{V}^{*})$, where $\tau_{U}'$ satisfies $Q_{Y_{N0}(\tau_{U}')}(\tau_{V})=Q_{Y_{I0}(\tau_{U}^{*})}(\tau_{V})$ for any $\tau_{V}\in (0,1)$.  
  \item[(iii)] If there exists a coordinatewise strictly increasing function $\gamma:S(U)\times S(V)\mapsto \mathbb{R}$ such that $Y^{N}_{igt}=h(\gamma(U_{igt},V_{gt}),t)$ where $h(\cdot,t)$ is strictly increasing for $t=0,1$, then
$Y_{I1}^{N}(\tau_{U}^{*},\tau_{V}^{*})=Q_{Y_{N1}(\tau_{U}')}(\tau_{V}')$, where $(\tau_{U}',\tau_{V}')\in \{(\tau_{U},\tau_{V})\in (0,1)^{2}:Q_{Y_{N0}(\tau_{U})}(\tau_{V})=Q_{Y_{I0}(\tau_{U}^{*})}(\tau_{V}^{*})\}$ and the set is nonempty. 
\end{enumerate}
\end{theorem}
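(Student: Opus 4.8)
The plan is to first prove a single representation statement and then read each of the three conclusions off from it. Fix $g$ and $t$. By Assumption \ref{assum.did.ind}, conditioning on $V_{gt}=v$ does not change the within-group law of $U_{igt}$, so the outcomes $Y^{N}_{igt}=h(U_{igt},v,t)$ are, by Assumption \ref{assum.did.iid}, identically distributed across $i$, and by strict monotonicity of $h(\cdot,v,t)$ (Assumption \ref{assum.did.mono}) their $\tau_{U}$-quantile equals $h(Q_{U_g}(\tau_{U}),v,t)$, where $Q_{U_g}=Q_{U_I}$ for $g\in\mathcal{G}^{I}$ and $Q_{U_g}=Q_{U_N}$ for $g\in\mathcal{G}^{N}$. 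Hence $Y^{N}_{gt}(\tau_{U})=h(Q_{U_g}(\tau_U),V_{gt},t)$, a strictly increasing function of $V_{gt}$. Taking the cross-group $\tau_V$-quantile over $g\in\mathcal{G}^{k}$, using that $V_{gt}$ is i.i.d.\ across such $g$ with quantile function $Q_{V_k}$ (Assumption \ref{assum.did.iid}) and that $h$ is strictly increasing in its second argument, I get $Y^{N}_{kt}(\tau_U,\tau_V)=h(Q_{U_k}(\tau_U),Q_{V_k}(\tau_V),t)$ for $k\in\{I,N\}$. Applying the same two steps to the \emph{observed} $Y_{gt}(\tau_U)$, which coincides with $Y^{N}_{gt}(\tau_U)$ whenever $t=0$ or $g\in\mathcal{G}^{N}$, gives
\begin{equation}\label{eq.proofplan.obs}
Q_{Y_{I0}(\tau_U)}(\tau_V)=h(Q_{U_I}(\tau_U),Q_{V_I}(\tau_V),0),\qquad Q_{Y_{N0}(\tau_U)}(\tau_V)=h(Q_{U_N}(\tau_U),Q_{V_N}(\tau_V),0),
\end{equation}
and $Q_{Y_{N1}(\tau_U)}(\tau_V)=h(Q_{U_N}(\tau_U),Q_{V_N}(\tau_V),1)$, while the target is $Y^{N}_{I1}(\tau_U^{*},\tau_V^{*})=h(Q_{U_I}(\tau_U^{*}),Q_{V_I}(\tau_V^{*}),1)$.

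For part (i) I would write $Q_U$ for the common value $Q_{U_I}=Q_{U_N}$. By \eqref{eq.proofplan.obs} the innermost term is $Q_{Y_{I0}(\tau_U)}(\tau_V^{*})=h(Q_U(\tau_U),Q_{V_I}(\tau_V^{*}),0)$; since $v\mapsto h(Q_U(\tau_U),v,0)$ is strictly increasing and $Q_{V_I}(\tau_V^{*})\in S(V_I)\subseteq S(V_N)$ by Assumption \ref{assum.did.support}, the distribution function $F_{Y_{N0}(\tau_U)}$ — the law of $h(Q_U(\tau_U),V_{g0},0)$ with $V_{g0}\sim Q_{V_N}$ — maps this point to $F_{V_N}(Q_{V_I}(\tau_V^{*}))$. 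Evaluating $Q_{Y_{N1}(\tau_U^{*})}$ there returns $h(Q_U(\tau_U^{*}),Q_{V_N}(F_{V_N}(Q_{V_I}(\tau_V^{*}))),1)=h(Q_U(\tau_U^{*}),Q_{V_I}(\tau_V^{*}),1)$, which is exactly the target; the intermediate $\tau_U$ cancels, so the identity holds for every $\tau_U\in(0,1)$. For part (ii) I would write $Q_V$ for $Q_{V_I}=Q_{V_N}$; by \eqref{eq.proofplan.obs}, $Q_{Y_{N0}(\tau_U')}(\tau_V)=Q_{Y_{I0}(\tau_U^{*})}(\tau_V)$ for all $\tau_V$ is equivalent to $h(Q_{U_N}(\tau_U'),Q_V(\tau_V),0)=h(Q_{U_I}(\tau_U^{*}),Q_V(\tau_V),0)$, and strict monotonicity in the first argument forces $Q_{U_N}(\tau_U')=Q_{U_I}(\tau_U^{*})$; such a $\tau_U'$ exists because $Q_{U_I}(\tau_U^{*})\in S(U_I)\subseteq S(U_N)$ (Assumption \ref{assum.did.support}), e.g.\ $\tau_U'=F_{U_N}(Q_{U_I}(\tau_U^{*}))$. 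Then $Q_{Y_{N1}(\tau_U')}(\tau_V^{*})=h(Q_{U_N}(\tau_U'),Q_V(\tau_V^{*}),1)=h(Q_{U_I}(\tau_U^{*}),Q_V(\tau_V^{*}),1)$, the target.

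For part (iii) I would set $W_{igt}=\gamma(U_{igt},V_{gt})$ and rerun the reduction with $h(u,v,t)$ replaced by $h(\gamma(u,v),t)$: coordinatewise strict monotonicity of $\gamma$ together with strict monotonicity of $h(\cdot,t)$ yields $Y^{N}_{gt}(\tau_U)=h(\gamma(Q_{U_g}(\tau_U),V_{gt}),t)$ and $Y^{N}_{kt}(\tau_U,\tau_V)=h(\gamma(Q_{U_k}(\tau_U),Q_{V_k}(\tau_V)),t)$, with the analogous formulas for the observables. Put $w^{*}=\gamma(Q_{U_I}(\tau_U^{*}),Q_{V_I}(\tau_V^{*}))$. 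By strict monotonicity of $h(\cdot,0)$, the defining condition $Q_{Y_{N0}(\tau_U')}(\tau_V')=Q_{Y_{I0}(\tau_U^{*})}(\tau_V^{*})$ is equivalent to $\gamma(Q_{U_N}(\tau_U'),Q_{V_N}(\tau_V'))=w^{*}$; for any pair $(\tau_U',\tau_V')$ satisfying it, $Q_{Y_{N1}(\tau_U')}(\tau_V')=h(\gamma(Q_{U_N}(\tau_U'),Q_{V_N}(\tau_V')),1)=h(w^{*},1)=h(\gamma(Q_{U_I}(\tau_U^{*}),Q_{V_I}(\tau_V^{*})),1)=Y^{N}_{I1}(\tau_U^{*},\tau_V^{*})$, so \emph{every} element of the set identifies the counterfactual. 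It remains to check nonemptiness: by Assumption \ref{assum.did.support}, $u^{\dagger}:=Q_{U_I}(\tau_U^{*})\in S(U_N)$ and $v^{\dagger}:=Q_{V_I}(\tau_V^{*})\in S(V_N)$, so $\tau_U^{\dagger}:=F_{U_N}(u^{\dagger})$ and $\tau_V^{\dagger}:=F_{V_N}(v^{\dagger})$ lie in $(0,1)$ with $Q_{U_N}(\tau_U^{\dagger})=u^{\dagger}$ and $Q_{V_N}(\tau_V^{\dagger})=v^{\dagger}$, whence $\gamma(Q_{U_N}(\tau_U^{\dagger}),Q_{V_N}(\tau_V^{\dagger}))=w^{*}$ and $(\tau_U^{\dagger},\tau_V^{\dagger})$ is in the set.

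The arithmetic above is short; the part that needs care is the passage through the quantile/distribution-function compositions in (i) and (iii). This is where strict monotonicity of $h$ in each argument (Assumption \ref{assum.did.mono}) is essential — it makes $h(\cdot,v,t)$, $h(u,\cdot,t)$, and $\gamma$ bijections onto their ranges, so that a quantile function genuinely inverts the corresponding distribution function — and where the support inclusions (Assumption \ref{assum.did.support}) are essential, since they guarantee that the points at which $F_{Y_{N0}(\tau_U)}$, $F_{U_N}$, $F_{V_N}$ are evaluated lie in the supports where $Q\circ F$ is the identity; boundary pathologies are avoided because $\tau_U^{*},\tau_V^{*}\in(0,1)$ are interior. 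Establishing nonemptiness of the identifying set in (iii) is the one genuinely new point relative to (i)--(ii), but, as shown, it reduces to the same support argument.
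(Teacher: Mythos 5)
Your proposal is correct and follows essentially the same route as the paper's proof: you derive the same representation $Y^{N}_{kt}(\tau_U,\tau_V)=h(Q_{U_k}(\tau_U),Q_{V_k}(\tau_V),t)$ (and its single-index analogue for part (iii)) and then match quantile levels across the treatment and control groups via Assumption \ref{assum.did.support}. The only cosmetic difference is that in part (i) you compute the composition $F_{Y_{N0}(\tau_U)}\circ Q_{Y_{I0}(\tau_U)}$ explicitly as $F_{V_N}\circ Q_{V_I}$, whereas the paper first defines $\tau_V'$ by $Q_{V_I}(\tau_V^{*})=Q_{V_N}(\tau_V')$ and then identifies it with that composition; the two are equivalent.
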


Theorem \ref{thm.did.id} establishes identification of $Y^{N}_{I1}(\tau_{U}^{*},\tau_{V}^{*})$ in three scenarios. Together with Assumption \ref{assum.did.iid}, case (i) requires that the distributions of $U_{igt}$ in the treatment and control groups are identical, but allows $V_{gt}$ to have a different distribution for $g\in\mathcal{G}^{I}$ from that for $g\in\mathcal{G}^{N}$. In this case, distributional heterogeneity in the unobservables is fully absorbed by the group level unobservable $V_{gt}$. This can be the case when, for instance, the learning ability ($U_{igt}$) has the identical distributions in all groups ($g$), but the treated groups in $t=1$ ($\mathcal{G}^{I}$) have different distributions from those untreated groups ($\mathcal{G}^{N}$). 

In contrast, the condition in case (ii) allows for distributional differences in $U_{igt}$ across the treatment and control groups. In this case, the requirements on $U_{igt}$ specified in Assumptions \ref{assum.did.iid} and \ref{assum.did.support} become identical to Assumptions 3.3 and 3.4 in \cite{athey2006identification}, which is indeed a special case of our setup because their $V_{gt}$ is a constant over $g$ and $t$ so our condition in case (ii) is directly satisfied. 

For both cases (i) and (ii), even when $G = 2$ as in \cite{athey2006identification}, in general, $v_{1t}\neq v_{0t}$ for realized values of the group level heterogeneity, which implies that the treatment and control groups would have different production functions. With a large $G$, we essentially are applying the \cite{athey2006identification} approach to the control and treatment groups with $V_{gt}$ at the same quantile level; in particular, for case (ii), $V_{gt}$ is controlled at the same value by the identical distribution assumption for $V_{gt}$.

Case (iii) allows for distributional differences across the treatment and control groups in both $U_{igt}$ and $V_{gt}$. As a cost, an extra single index structure is needed for the production function $h$. 
This restriction, though not fully general, is economically natural whenever individual and group unobservables enter outcomes through a common latent channel (e.g., “effective student quality” as a monotonic function of ability and school resources). It preserves the nonseparable flavor of the model while allowing monotone-rearrangement arguments \citep{matzkin2003nonparametric}. The fully nonparametric case without any index structure remains an important open challenge that would likely require entirely new identification arguments; we view the present results as a tractable and immediately usable intermediate step.

Theorem \ref{thm.did.id} has a clear interpretation about how one should construct the counterfactuals for the treatment group in period $1$. Due to the differences in the distributions of $U_{igt}$ or $V_{gt}$ in the treatment and control groups, $U_{igt}$ and/or ${V}_{gt}$ for the treatment group at the given quantiles $(\tau_{U}^{*},\tau_{V}^{*})$ may rank differently in the control group. So, we need to find an appropriate ``comparison group'' by matching the unobservables at the correct quantiles in $\mathcal{G}^{N}$. Once the quantiles are matched using the variation in period 0, the same quantiles are still matched in period 1 since the distributions of $U_{igt}$ and $V_{gt}$ do not change over time, yielding our identification result.

\subsection{Testable Implications}

Theorems \ref{thm.did.id} provides three identification equations under different assumptions on the distributions of $U_{igt}$ and $V_{gt}$. Since the expressions of $Y^{N}_{I1}(\tau_{U}^{*},\tau_{V}^{*})$ in the three cases are different, it is useful to know in practice which world one lives in. The following theorem provides testable implications under an extra assumption on the support of the unobservables.
\begin{assum}\label{assum.did.compact}
The support sets $S(U_{I})$ and $S(V_{I})$ are compact.
\end{assum}
We have the following theorem.
\begin{theorem}\label{thm.did.test}
Under Assumptions \ref{assum.did.mono}-\ref{assum.did.compact}, the following statements are true:
\begin{enumerate}
  \item[(i)] The condition in Theorem \ref{thm.did.id}-(i) holds if and only if for each $\tau_{V}\in[0,1]$, there exists a $\tau_{V}'$ which does not depend on $\tau_{U}$ such that $Q_{Y_{N0}(\tau_{U})}(\tau_{V}')=Q_{Y_{I0}(\tau_{U})}(\tau_{V})$.
  \item[(ii)] The condition in Theorem \ref{thm.did.id}-(ii) holds if and only if for each $\tau_{U}\in [0,1]$, there exists a $\tau_{U}'$ which does not depend on $\tau_{V}$ such that $Q_{Y_{N0}(\tau_{U}')}(\tau_{V})=Q_{Y_{I0}(\tau_{U})}(\tau_{V})$.
  \item[(iii)] The result in Theorem \ref{thm.did.id}-(iii) holds for all $(\tau_{U}^{*},\tau_{V}^{*})\in [0,1]^{2}$ if for all $(\tau_{U},\tau_{V})$ and $(\tau_{U}',\tau_{V}')$ in $[0,1]^{2}$, $Q_{Y_{N1}(\tau_{U})}(\tau_{V})=Q_{Y_{N1}(\tau_{U}')}(\tau_{V}')$ holds whenever $Q_{Y_{N0}(\tau_{U})}(\tau_{V})=Q_{Y_{N0}(\tau_{U}')}(\tau_{V}')$. Suppose Assumption \ref{assum.did.support} is strengthened as $S(U_{I})=S(U_{N})$ and $S(V_{I})=S(V_{N})$, the converse is also true.
\end{enumerate}
\end{theorem}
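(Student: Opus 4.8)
The plan is to push everything through one structural identity. First I would record the representation already doing the work behind Theorem~\ref{thm.did.id}: under Assumptions~\ref{assum.did.mono}--\ref{assum.did.ind}, the only within-group source of randomness in $Y^{N}_{igt}$ is $V_{gt}$, and since $U_{igt}\perp V_{gt}$ the within-group $\tau_{U}$-quantile is $h(Q_{U_{k}}(\tau_{U}),V_{gt},t)$; taking the cross-group $\tau_{V}$-quantile and using strict monotonicity of $h$ in $v$ then gives, for every identified $(k,t)\neq(I,1)$,
\[
Q_{Y_{kt}(\tau_{U})}(\tau_{V})=h\bigl(Q_{U_{k}}(\tau_{U}),Q_{V_{k}}(\tau_{V}),t\bigr),\qquad Y^{N}_{I1}(\tau_{U},\tau_{V})=h\bigl(Q_{U_{I}}(\tau_{U}),Q_{V_{I}}(\tau_{V}),1\bigr),
\]
with all quantile and CDF maps involved strictly monotone, and, by Assumption~\ref{assum.did.compact}, with $Q_{U_{I}}$ and $Q_{V_{I}}$ attaining their endpoint values in the compact supports $S(U_{I}),S(V_{I})$. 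Every statement below is then read off from this display together with Assumptions~\ref{assum.did.support}--\ref{assum.did.compact}.

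The implications from each structural condition to its claimed observable restriction are direct substitutions. For (i), $Q_{U_{I}}=Q_{U_{N}}$ makes $F_{Y_{N0}(\tau_{U})}(Q_{Y_{I0}(\tau_{U})}(\tau_{V}))$ equal to the $s$ solving $h(Q_{U_{N}}(\tau_{U}),Q_{V_{N}}(s),0)=h(Q_{U_{N}}(\tau_{U}),Q_{V_{I}}(\tau_{V}),0)$, i.e.\ $s=F_{V_{N}}(Q_{V_{I}}(\tau_{V}))$ after cancelling in $u$ and inverting in $v$, which is free of $\tau_{U}$. For (ii), $Q_{V_{I}}=Q_{V_{N}}$ makes $\tau_{U}'=F_{U_{N}}(Q_{U_{I}}(\tau_{U}))$ (well defined by Assumption~\ref{assum.did.support}) solve $Q_{Y_{N0}(\tau_{U}')}(\tau_{V})=Q_{Y_{I0}(\tau_{U})}(\tau_{V})$ for every $\tau_{V}$, without depending on $\tau_{V}$. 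For the forward half of (iii) I would not invoke the single-index restriction of Theorem~\ref{thm.did.id}-(iii)---that is only the special case in which $\gamma(Q_{U_{k}}(\tau_{U}),Q_{V_{k}}(\tau_{V}))$ serves as a $t$-free index and makes the level-set condition automatic---but argue directly: by Assumption~\ref{assum.did.support}, $Q_{Y_{I0}(\tau_{U}^{*})}(\tau_{V}^{*})=h(Q_{U_{I}}(\tau_{U}^{*}),Q_{V_{I}}(\tau_{V}^{*}),0)$ equals $Q_{Y_{N0}(a)}(b)$ for some $(a,b)$ with $Q_{U_{N}}(a)=Q_{U_{I}}(\tau_{U}^{*})$ and $Q_{V_{N}}(b)=Q_{V_{I}}(\tau_{V}^{*})$, so the level set in Theorem~\ref{thm.did.id}-(iii) is nonempty; all its members share this $t=0$ value, so by the level-set hypothesis they share a common $t=1$ value, which via $(a,b)$ is $h(Q_{U_{I}}(\tau_{U}^{*}),Q_{V_{I}}(\tau_{V}^{*}),1)=Y^{N}_{I1}(\tau_{U}^{*},\tau_{V}^{*})$.

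The substance is the reverse implications, where an observable identity has to be converted into equality of latent distributions. For (i): assuming $m(\tau_{U},\tau_{V}):=F_{Y_{N0}(\tau_{U})}(Q_{Y_{I0}(\tau_{U})}(\tau_{V}))$ is constant in $\tau_{U}\in[0,1]$, first check $m(\tau_{U},\tau_{V})\in(0,1)$ for $\tau_{V}\in(0,1)$ by comparing $Q_{Y_{I0}(0)}(\tau_{V})$ and $Q_{Y_{I0}(1)}(\tau_{V})$ with the support endpoints of $Y_{N0}(0)$ and $Y_{N0}(1)$ and using Assumption~\ref{assum.did.support}; this places us in the interior, so $m(\tau_{U},\tau_{V})=F_{V_{N}}(w(\tau_{V}))$ with $w(\tau_{V})$ the ($\tau_{U}$-free) solution of $h(Q_{U_{N}}(\tau_{U}),w(\tau_{V}),0)=h(Q_{U_{I}}(\tau_{U}),Q_{V_{I}}(\tau_{V}),0)$. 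Evaluating this at $\tau_{U}=1$ and $\tau_{U}=0$ and playing $\sup S(U_{I})\le\sup S(U_{N})$ and $\inf S(U_{I})\ge\inf S(U_{N})$ against strict monotonicity in $u$ squeezes $w(\tau_{V})=Q_{V_{I}}(\tau_{V})$, forces the $U$-support endpoints to coincide, and then, substituting back and cancelling in $v$, gives $Q_{U_{N}}=Q_{U_{I}}$. Case (ii) is the mirror image: the hypothesis gives $h(Q_{U_{N}}(\tau_{U}'),Q_{V_{N}}(\tau_{V}),0)=h(Q_{U_{I}}(\tau_{U}),Q_{V_{I}}(\tau_{V}),0)$ for all $\tau_{V}$ with $\tau_{U}'$ fixed, and letting $\tau_{V}\to 1$ and $\tau_{V}\to 0$ and squeezing with the $V$-support inclusions gives $Q_{U_{N}}(\tau_{U}')=Q_{U_{I}}(\tau_{U})$ together with equal $V$-support endpoints, hence $Q_{V_{N}}=Q_{V_{I}}$. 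For the converse in (iii) under the strengthened support condition, note that $(\tau_{U}^{*},\tau_{V}^{*})\mapsto Q_{Y_{I0}(\tau_{U}^{*})}(\tau_{V}^{*})=h(Q_{U_{I}}(\tau_{U}^{*}),Q_{V_{I}}(\tau_{V}^{*}),0)$ then has exactly the same range as $(\tau_{U},\tau_{V})\mapsto Q_{Y_{N0}(\tau_{U})}(\tau_{V})$ (compactness of $S(U_{I}),S(V_{I})$ supplies surjectivity onto the supports); hence, given any two control points with equal $t=0$ value, pick a treatment point hitting that value, observe both control points belong to the level set of Theorem~\ref{thm.did.id}-(iii) on which $Q_{Y_{N1}}$ is constant by hypothesis, and read off level-set preservation.

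I expect the main obstacle to be this squeeze step in (i) and (ii). The bare identity $h(Q_{U_{N}}(\tau_{U}),w(\tau_{V}),0)=h(Q_{U_{I}}(\tau_{U}),Q_{V_{I}}(\tau_{V}),0)$, read over interior $\tau_{U}$ alone, only pins down a monotone transformation linking $Q_{U_{N}}$ and $Q_{U_{I}}$ (a pure location shift when $h$ is additive), not their equality; it is precisely the evaluation at the extreme quantiles---legitimate only thanks to the compactness in Assumption~\ref{assum.did.compact}---played against the one-sided inclusions of Assumption~\ref{assum.did.support} that kills every nontrivial transformation and, as a by-product, upgrades those inclusions to equalities. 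The attendant nuisance, handled with the same two ingredients, is to confirm that $m$ (and the matched $\tau_{U}'$) never collapse to $0$ or $1$, so that the monotone inverses used throughout are legitimate.
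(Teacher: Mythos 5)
Your proposal is correct and follows essentially the same route as the paper: the structural identity $Q_{Y_{k0}(\tau_{U})}(\tau_{V})=h(Q_{U_{k}}(\tau_{U}),Q_{V_{k}}(\tau_{V}),0)$, then for the converses of (i)--(ii) an endpoint argument combining Assumptions \ref{assum.did.support} and \ref{assum.did.compact} to pin down the latent quantile functions, and for (iii) the level-set/matching argument via Assumption \ref{assum.did.support}. The only cosmetic differences are that you run the endpoint step as a direct two-sided squeeze at $\tau=0,1$ where the paper argues by contradiction via a crossing point, and that you write out part (ii), which the paper omits as symmetric.
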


It is worth noting that only compactness of $S(U_{I})$ in Assumption \ref{assum.did.compact} is needed to show Theorem \ref{thm.did.test}-(i). In combination with Assumption \ref{assum.did.support}, it guarantees that $Q_{U_{I}}(\cdot)$ and $Q_{U_{N}}(\cdot)$ intersect at least once on $[0,1]$. The quantile crossing requirement for $U_{igt}$ for $g\in\mathcal{G}^{I}$ and $\mathcal{G}^{N}$ is mild because here we do not need any additional restrictions on $V_{gt}$, which mainly accounts for the systematic in this case. Compactness of $S(V_{I})$ is needed for a similar reason to show Theorem \ref{thm.did.test}-(ii). 

Theorem \ref{thm.did.test} says that one can in principle check which of the three identification results in Theorem \ref{thm.did.id} to use. Since all three conditions in Theorem \ref{thm.did.id} have if-and-only-if testable implications, it is conclusive to use those results when the corresponding implications hold. Although stated in population terms, these implications are assessable empirically. One can replace the population quantile functions with uniformly consistent sample estimators (as detailed in Section \ref{sec.est.cic}) and test whether the relevant integrated squared differences\footnote{Using case (ii) as an example, one can verify whether the sample analogue of $\int_{0}^{1}(\min_{\tau_{U}'\in [0,1]}\int_{0}^{1} (Q_{Y_{N0}(\tau'_{U})}(\tau_{V})-Q_{Y_{I0}(\tau_{U})}(\tau_{V}))^{2}d\tau_{V})d\tau_{U}$ is equal to 0. } (or the equality of quantile surfaces) are statistically close to zero using bootstrap critical values or asymptotic Kolmogorov–Smirnov-type statistics for quantile processes \citep{chernozhukov2013inference}. Such tests would help practitioners choose the appropriate case of Theorem \ref{thm.did.test} and provide a natural starting point for formal specification testing in future work.

 Note that the three conditions in Theorem \ref{thm.did.id} are not mutually exclusive, so it is possible that multiple results in Theorem \ref{thm.did.id} hold simultaneously, yielding overidentification.

\section{Distributional Synthetic Control with Group Heterogeneity}\label{sec.sc}
Unlike the differences-in-differences setup where $T_{0}=0$ and $T=1$, we adopt the classical synthetic control setting where $G$ is fixed but $T_{0}$ is large. Treatment is only given to group $g=1$ in period $T_{0}+1$. Meanwhile, we assume $T-T_{0}$ is large as well. 

Our goal is to construct a synthetic control for the post-treatment period counterfactual $Y_{1t}^{N}(\tau_{U},\tau_{V})$ for $t>T_{0}$ for arbitrary $(\tau_{U},\tau_{V})$, which, recalling the definition of this notation, refers to the $\tau_{V}$-th time series quantile in period $t$ of the $\tau_{U}$-th individual quantile of $Y^{N}_{i1t}$. We now introduce assumptions such that $Y_{1t}^{N}(\tau_{U},\tau_{V})$ is unchanged over $t$ for all $t>T_{0}$ or $t\leq T_{0}$, denoted by $Y_{1,post}^{N}(\tau_{U},\tau_{V})$, and $Y_{1,pre}^{N}(\tau_{U},\tau_{V})$ respectively.
 
\begin{assum}\label{assum.sc.mono}
The production function $h(u,v,t)=h(u,v,post)$ for all $t>T_{0}$, and $h(u,v,t)=h(u,v,pre)$ for all $t\leq T_{0}$.
\end{assum}
We relax this assumption in Section \ref{sec.trend} below to allow additive (group-homogeneous or group-heterogeneous) time trends, so that the production function can change smoothly over time while still being constant within the pre- and post-treatment windows after de-trending.

\begin{assum}\label{assum.sc.iid}
$U_{igt}$ has identical distribution in $i$, $g$ and $t$ for $t\leq T_{0}$ and $t>T_{0}$, with quantile functions $Q_{U_{pre}}$ and $Q_{U_{post}}$, respectively. $V_{gt}$ has identical distribution in $t$ for $t\leq T_{0}$ and $t>T_{0}$, with quantile functions $Q_{V_{g,pre}}$ and $Q_{V_{g,post}}$, respectively.
\end{assum}

Similar to \cite{gunsilius2023distributional}, we do not impose any monotonicity condition on $h$. Under the independence condition Assumption \ref{assum.did.ind}, Assumption \ref{assum.sc.iid} implies that for any $\tau_{U}\in (0,1)$, the $\tau_{U}$-quantile of $Y_{igt}^{N}$ in group $g$ and period $t$ satisfies
\begin{equation*}\label{eq.sc.represent}
  Y_{gt}^{N}(\tau_{U})=\begin{cases}\tilde{h}(V_{gt},pre,\tau_{U})&t\leq T_{0},
  \\\tilde{h}(V_{gt},post,\tau_{U}),&t> T_{0}, \end{cases}
\end{equation*}
for some $\tilde{h}$ function. 
\begin{remark}
Our assumptions rule out time trends in the outcome variables; for all $t\leq T_{0}$ or $t>T_{0}$, the group-wise or unconditional mean of the outcome is constant in time since $h$ stays unchanged and $(U_{igt},V_{gt})$ are identically distributed in $t$. We extend our model in Section \ref{sec.trend} to include certain time trends.
\end{remark}

Due to the heterogeneity in the realization of $V_{gt}$, the approach in \cite{gunsilius2023distributional} by directly using $Y_{gt}(\tau_{U})$ ($g>1$) at the same time period $t$ to construct the synthetic control may not be appropriate. Instead, for each $g>1$, we reshuffle $Y_{gt}(\tau_{U})$ over time to construct the synthetic control so that the $V_{gt_{g}}$s ($g>1$ and $t_{g}$ may be different for different $g$) are comparable to the targeted $V_{1t}$. This idea is similar to the synthetic differences-in-differences approach in \cite{arkhangelsky2021synthetic}. We will discuss the relationship between our approach and these alternatives in more detail in Sections \ref{sec.comp.sdid} and \ref{sec.comp.dsc}.

Specifically, let $Q_{Y_{g,pre}(\tau_{U})}(\tau_{V})$ and $Q_{Y_{g,post}(\tau_{U})}(\tau_{V})$ denote the quantile function of $Y_{gt}(\tau_{U})$ for $t\leq T_{0}$ and $t>T_{0}$, respectively. Assume weights $\lambda^{*}(\tau_{U})\coloneqq \{\lambda_{g}^{*}(\tau_{U})\}_{g=2,\ldots,G}\in \Delta^{G-1}$ exist, where $\Delta^{G-1}$ denotes the $(G-1)$ dimensional simplex, such that 
\begin{equation}\label{eq.sc.weights}
  Q_{Y_{1,pre}(\tau_{U})}(\tau_{V})=\sum_{g=2}^{G}\lambda_{g}^{*}(\tau_{U})Q_{Y_{g,pre}(\tau_{U})}(\tau_{V}),\ \forall \tau_{V}\in (0,1).
\end{equation}
We can use these weights to construct the synthetic control under an isometry condition on $\tilde{h}$.
\begin{theorem}\label{thm.sc.id}
Let Assumptions \ref{assum.did.ind}, \ref{assum.sc.mono} and \ref{assum.sc.iid} hold. If $\tilde{h}(\cdot,j,\tau_{U})$ is a scaled isometry on the 2-Wasserstein space for all $j\in\{pre,post\}$, and if the maps $V_{gt}\mapsto V_{gt'}$ for all $g=1,\ldots,G$ are such that they preserve the relative weights $\lambda^{*}(\tau_{U})$ between the probability measures $P_{V_{1t}}$ and $P_{V_{gt}}$ for each $g>1$, then $Y_{1,post}^{N}(\tau_{U},\tau_{V})=\sum_{g=2}^{G}\lambda_{g}^{*}(\tau_{U})Q_{Y_{g,post}(\tau_{U})}(\tau_{V})$ for all $\tau_{V}$.
\end{theorem}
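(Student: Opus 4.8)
The plan is to transport the weighting relation \eqref{eq.sc.weights} from the space of outcome-quantile functions down to the space of group-level distributions $\{P_{V_{g,pre}}\}$, observe that the resulting identity cannot depend on $\tau_U$, carry it forward to the post-treatment block using the second hypothesis, and then push it back up through $\tilde{h}(\cdot,post,\tau_U^*)$ to read off the counterfactual.

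First I would use the representation displayed just after Assumption \ref{assum.sc.iid}: under Assumptions \ref{assum.did.ind}, \ref{assum.sc.mono} and \ref{assum.sc.iid} one has $Y^N_{gt}(\tau_U)=\tilde{h}(V_{gt},j,\tau_U)$ for $t$ in block $j\in\{pre,post\}$, so the time-series quantile function $Q_{Y_{g,j}(\tau_U)}$ is exactly the quantile function of the pushforward $\tilde{h}(\cdot,j,\tau_U)_{\#}P_{V_{g,j}}$, where $P_{V_{g,j}}$ is the common law of $V_{gt}$ over block $j$; for the controls $g\geq 2$ this coincides with the analogous object for $Y^N$ since untreated groups satisfy $Y_{gt}=Y^N_{gt}$, and the same identification holds for $g=1$ in the \emph{pre}-block. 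Since in $\mathcal{W}_2(\mathbb{R})$ a convex combination of quantile functions is the quantile function of the Wasserstein barycenter with those weights, \eqref{eq.sc.weights} says precisely that $\tilde{h}(\cdot,pre,\tau_U)_{\#}P_{V_{1,pre}}$ is the $\lambda^*(\tau_U)$-weighted Wasserstein barycenter of $\{\tilde{h}(\cdot,pre,\tau_U)_{\#}P_{V_{g,pre}}\}_{g\geq 2}$.

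Next comes the scaled-isometry hypothesis, which is the heart of the argument. A pushforward map on $\mathcal{W}_2(\mathbb{R})$ induced by a scalar function is a scaled isometry exactly when that function is affine, and such a map is then a bijection of $\mathcal{W}_2(\mathbb{R})$ carrying Wasserstein barycenters to Wasserstein barycenters with unchanged weights (invertibly, since 1D barycenters are unique). Pulling the pre-block barycenter identity back through $\tilde{h}(\cdot,pre,\tau_U)$ therefore yields $Q_{V_{1,pre}}=\sum_{g\geq 2}\lambda^*_g(\tau_U)Q_{V_{g,pre}}$ for every $\tau_U$ --- concretely, writing $\tilde{h}(v,pre,\tau_U)=a(\tau_U)v+b(\tau_U)$ in \eqref{eq.sc.weights}, the intercepts cancel because $\lambda^*(\tau_U)\in\Delta^{G-1}$ and the nonzero slope divides out. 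The left-hand side is free of $\tau_U$, so (granting that $Q_{V_{1,pre}}$ has a unique representation as a convex combination of the $\{Q_{V_{g,pre}}\}_{g\geq 2}$) the weights cannot depend on $\tau_U$; this gives the first assertion, with common value $\lambda^*$. The hypothesis that the maps $V_{gt}\mapsto V_{gt'}$ preserve the relative weights between $P_{V_{1t}}$ and $P_{V_{gt}}$ is exactly the statement that this barycentric identity for the $V$'s persists into the post-block with the same $\lambda^*$, i.e. $Q_{V_{1,post}}=\sum_{g\geq 2}\lambda^*_gQ_{V_{g,post}}$, which gives the ``identical across $j$'' part. Pushing this forward through the post-block scaled isometry $\tilde{h}(\cdot,post,\tau_U^*)$ then yields $Q_{Y^N_{1,post}(\tau_U^*)}=\sum_{g\geq 2}\lambda^*_gQ_{Y_{g,post}(\tau_U^*)}$; since $Y^N_{1,post}(\tau_U^*,\tau_V^*)$ is by definition the $\tau_V^*$-th quantile of $\tilde{h}(\cdot,post,\tau_U^*)_{\#}P_{V_{1,post}}$, evaluating at $\tau_V^*$ closes the proof.

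The main obstacle, as flagged, is the ``descent'' step: using the scaled-isometry assumption to convert the convex-combination identity among outcome quantiles into the corresponding identity among the $V$-quantiles, and then extracting $\tau_U$-invariance of the weights. Two uniqueness points need care --- uniqueness of the 1D Wasserstein barycenter (so that ``pulling back'' is well defined) and uniqueness of the weight vector representing $Q_{V_{1,pre}}$ as a convex combination of $\{Q_{V_{g,pre}}\}_{g\geq 2}$ (equivalently, affine independence of these control quantile functions) --- since without the latter one obtains only that a common weight vector exists, not that $\lambda^*(\tau_U)$ is literally the same for all $\tau_U$. A secondary point is to pin down precisely what ``preserve the relative weights'' requires of the maps $V_{gt}\mapsto V_{gt'}$ and to verify it delivers the post-block identity; once that is granted, the final push-forward through the post-block isometry is routine.
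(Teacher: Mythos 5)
Your proposal is correct and follows essentially the same route as the paper's own (very terse) proof, which simply invokes the isometry of $\tilde{h}$ and the weight-preservation of the maps $V_{gt}\mapsto V_{gt'}$ and defers the details to \cite{gunsilius2023distributional}; your write-up fills in exactly those details (pushforward/barycenter descent to the $V$-distributions, cancellation of the affine intercepts, transfer to the post block, and push-forward back up). Your flagged caveat --- that literal $\tau_U$-invariance of $\lambda^{*}(\tau_U)$ requires uniqueness of the convex representation of $Q_{V_{1,pre}}$ in terms of $\{Q_{V_{g,pre}}\}_{g\geq 2}$ --- is a genuine point the paper leaves implicit, not a flaw in your argument.
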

\begin{remark}
If the isometry condition is strengthened to hold for all $\tau_{U} \in (0,1)$, the optimal weights become independent of $\tau_{U}$, recovering the standard constant-weight synthetic control (including \cite{gunsilius2023distributional}, under homogeneity). The more general $\tau_{U}$-dependent weights are, however, a central feature of our framework and are empirically relevant whenever group-level heterogeneity interacts with individual unobservables. For example, high-ability students (high $\tau_{U}$) may benefit disproportionately from high teacher quality (high $V_{gt}$), so that the “best” synthetic control groups differ across the outcome distribution. Allowing the optimal weights to vary therefore captures rich, quantile-specific group matching that constant-weight methods cannot accommodate and directly addresses settings with heterogeneous group effects—precisely the motivation for our group-heterogeneous DSC.
\end{remark}

Our synthetic control, as the counterfactual quantile function of $Y_{1t}^{N}(\tau_{U})$ for $t>T_{0}$, is constructed using the time series quantile functions of $Y_{gt}(\tau_{U})$ for $g>1$. To identify these quantile functions, we need large $T_{0}$ and $T-T_{0}$ so that the time series distributions of the $Y_{gt}(\tau_{U})$s are identified for each $g$ for $t\leq T_{0}$ and $t>T_{0}$. The requirement of a large $T_{0}$ is similar to the classical synthetic control methods. 

In contrast to \cite{gunsilius2023distributional}, our synthetic control is not directly using the $Y_{gt}(\tau_{U})$s ($g>1$) at the same period $t$. We use the $Y_{gt_{g}}(\tau_{U})$s ($g>1$) such that the $t_{g}$s may be different but make the $Y_{gt_{g}}(\tau_{U})$s be at the same quantile level in their time series distributions. 

In the following subsections, we first discuss how to generalize our model to include a time trend. We will then compare our methods with related approaches in detail.

\subsection{Time Trend}\label{sec.trend}
Our model can be extended to include an additive nonparametric group-homogeneous time trend $\text{tr}(t)$ or an additive parametric group-heterogeneous trend $\text{tr}(t;\theta_{g})$. Thus, the assumption that $h$ is constant within the pre- and post-treatment regimes is not as restrictive as it first appears; we now show that any smooth time trend can be removed nonparametrically or parametrically before applying the synthetic control.

We first consider a homogeneous time trend $\text{tr}(t)$ where $\text{tr}(0)$ is normalized to be 0. Now our model is $Y_{igt}=h(U_{igt},V_{gt},t)+\text{tr}(t)$ for $g\in \mathcal{G}^{N}$ for all $t$. Under Assumption \ref{assum.sc.iid}, trend $\text{tr}(t)$ is identified for every $t$ as the difference of the means of $Y_{igt}$ over the control groups between period $t$ and period $0$. 

For an additive group-specific time trend $\text{tr}(t;\theta_{g})$ where $\text{tr}(\cdot;\theta_{g})$ is a known function up to a finite-dimensional parameter $\theta_{g}$, we can identify $\theta_{g}$ for every $g$ by solving moment equations formed by $\partial_{t}\text{tr}(t;\theta_{g})=\partial_{t}\mathbb{E}(Y_{igt})$ for each $g$ and $t\leq T_{0}$, where the expectation is taken over $i$ for fixed $t$ and $g$. 

Once the trend is identified, all the previous analysis follows by subtracting the trend from $Y_{igt}$. 

\subsection{Comparison with the Classical Synthetic Control Methods}\label{sec.comp.classical}
\begin{sloppypar}The classical synthetic control methods are usually applied to aggregate level data \citep{abadie2003economic,abadie2010synthetic,abadie2015comparative,abadie2021using}. For individual level data, one first performs certain aggregation to obtain group level data, then constructs the synthetic control at the group level \citep{abadie2021using}. Our method naturally mimics this procedure: The construction of $Y_{gt}(\tau_{U}^{*})$, i.e., the within group quantile of $Y_{igt}$, is a nonparametric analogue of the aggregation step. Then the construction of the group level distributional synthetic control is analogous to the classical synthetic control methods for group level data. For the second step to work, both approaches require a large $T_{0}$, whereas our approach also needs a large $T-T_{0}$.\end{sloppypar}

\begin{table}[htbp]
\centering
  \caption{Comparison with Classical Methods}
  \label{tab:comp1}%
    \begin{tabular}{@{}ccc@{}@{}}
    \hline
    \hline
    & Our Method & Classical Methods \\
    \hline
    Step 1 (``Aggregation'') & \textit{Within group quantiles}  & \textit{Within group average/sum}  \\
    \hline
    \hspace{-7.8em} Step 2 & \multicolumn{2}{c}{\textit{Construct synthetic control with group level data}}\\
    \hline
    \hspace{-5.4em}Time series &  \textit{Large $T_{0}$ and $T-T_{0}$} & \textit{Large $T_{0}$}  \\
    \hline
    \end{tabular}%
  
\end{table}%

From a modeling perspective, our model with both individual and group heterogeneity is close to the classical methods. Adapting a linear factor model \citep{abadie2010synthetic,abadie2021using} without covariates to the individual level, we have
\begin{equation}\label{eq.factor}
Y_{igt}=  \delta_{t}+\mu_{g}'\theta_{t}+U_{igt},
\end{equation}
where $\delta_{t}$ is a common trend, $\mu_{g}$ and $\theta_{t}$ are unobserved factor loadings and common factors. Our model satisfies equation \eqref{eq.factor} by letting $V_{gt}\coloneqq \delta_{t}+\mu_{g}'\theta_{t}$. Taking the group level average and denoting $\bar{Y}_{gt}\coloneqq \sum_{i}^{n_{g}}Y_{igt}/n_{g}$ and $\varepsilon_{gt}\coloneqq \sum_{i}^{n_{g}}U_{igt}/n_{g}$, we then have exactly the linear factor model in \cite{abadie2010synthetic} and \cite{abadie2021using}:
\begin{equation*}\label{eq.factor.agg}
  \bar{Y}_{gt}=\delta_{t}+\mu_{g}'\theta_{t}+\varepsilon_{gt}.
\end{equation*}
Hence, our $V_{gt}$ captures the factor structure in the classical synthetic control methods. 
\subsection{Comparison with Synthetic Differences-in-Differences (SDID)}\label{sec.comp.sdid}
\cite{arkhangelsky2021synthetic} proposes a method that combines features of both the classical synthetic control method and the differences-in-differences methods. Similar to the classical synthetic control method, they also take a factor model as the data generating process. In contrast to the synthetic control method which constructs the ``control'' only using weights that are unchanged over time, SDID also estimates time weights so that different pre-treatment periods can have different weights when forming the control. 

Although our model and thus our method are different from theirs, this feature of their method echoes the spirit of ours; the treatment groups and control groups at the same time period may not necessarily be comparable. We take care of such possible incompatibility by reshuffling the time series and matching the time series quantiles, whereas they adjust the relevance of different time periods by weights.

\subsection{Comparison with \cite{gunsilius2023distributional}}\label{sec.comp.dsc}
Group level unobservable $V_{gt}$ is not present in \cite{gunsilius2023distributional}, where $Y^{N}_{igt}$ is assumed to be equal to $h(U_{igt},t)$. For any realization of $U_{igt}=u$, the model implies $Y^{N}_{igt}=h(u,t)$, which by construction can only change in time, but is fixed across all groups. Our model, on the other hand, allows for both time and group heterogeneity because $V_{gt}$'s realization changes in $g$ and $t$. Moreover, we can also allow deterministic time trend as discussed in Section \ref{sec.trend}.

When group level heterogeneity indeed exists, the synthetic control in \cite{gunsilius2023distributional} may not fit the counterfactual well. Suppose our model is true. Denote the realized $Y_{gt}(\tau_{U})$ in each $g$ and $t$ by $y_{gt}(\tau_{U})$. \cite{gunsilius2023distributional} obtains weights $\lambda_{g}$s by the following equation under our notation:
\begin{equation}\label{eq.sc.weightsG}
  y_{1t}(\tau_{U})=\sum_{g=2}^{G}\lambda_{g}y_{gt}(\tau_{U}),t\leq T_{0},\ \forall \tau_{U}\in (0,1).
\end{equation}
Note that $y_{1t}(\tau_{U})=\tilde{h}(v_{gt},pre,\tau_{U})$ where $v_{gt}$ is the realization of $V_{gt}$. Unless $V_{gt}$ does not depend on $g$ so that $v_{gt}$ is identical for all $g=1,\ldots,G$ in period $t$, the quantile levels of the time series distributions of the $Y_{gt}(\tau_{U})$s that these $y_{gt}(\tau_{U})$s correspond to are, in general, different. For concreteness, let $\tau_{V,g}$ be such that $Q_{Y_{g,pre}(\tau_{U})}(\tau_{V,g})=y_{gt}(\tau_{U})$. Then equation \eqref{eq.sc.weightsG} for one fixed $t$ is equivalent to the following equation:
\begin{equation*}
  Q_{Y_{1,pre}(\tau_{U})}(\tau_{V,1})=\sum_{g=2}^{G}\lambda_{g}Q_{Y_{g,pre}(\tau_{U})}(\tau_{V,g}),
\end{equation*}
 where the $\tau_{V,g}$s ($g=1,\ldots,G$) are unknown and in general not equal to each other. Therefore, weights $(\lambda_{g})$ in \cite{gunsilius2023distributional} are not obtained by matching the quantile functions of $Y_{gt}(\tau_{U})$; they are from matching the quantile function values at \textit{different quantile levels}. Consequently, 
 such weights do not in general yield the barycenter of the 2-Wasserstein space of the distributions of $Y_{g,pre}(\tau_{U})$s. The counterfactuals in $t>T_{0}$ constructed using these weights are not necessarily valid. 
 \begin{table}
 \centering
  \caption{Comparison with \cite{gunsilius2023distributional}}
  \label{tab:comp2}%
    \begin{tabular}{@{}ccc@{}@{}}
    \hline
    \hline
    & Our Method & \cite{gunsilius2023distributional} \\
     \hline
    Weights & \textit{Same time series quantile level} &\textit{Same time periods}\\
     \hline
    Time periods  & \textit{Long}  & \textit{Short}  \\
    \hline
    \end{tabular}%
  \end{table}%

In contrast, our method does not restrict the construction of the synthetic control to the same period. Instead, we utilize time series variation to guarantee that the weights are obtained by matching the entire quantile function $Q_{Y_{1,pre}(\tau_{U})}(\cdot)$; see equation \eqref{eq.sc.weights}. Under the isometry conditions in Theorem \ref{thm.sc.id}, weights obtained in this way in the pre-treatment periods are valid in post-treatment periods. 

As for the different data requirements, we emphasize that such difference is due to the different goals. \cite{gunsilius2023distributional} identifies the counterfactual using only two periods and cross-sectional variation across groups at the same time; our method uses long time series for each group to identify the full time-series quantile functions of the within-group quantiles and thereby match on $V_{gt}$. This is the natural analogue of the classical synthetic control literature, which already requires large $T_{0}$. In short, different data requirements reflect the different identifying sources.

\section{Implementation}
We sketch estimation and the data requirements in this section.
\subsection{Changes-in-Changes}\label{sec.est.cic}
For simplicity, we only consider estimating $Y^{N}_{I1}(\tau_{U}^{*},\tau_{V}^{*})$ under condition (i) in Theorem \ref{thm.did.id}; the other two cases are similar. All the assumptions in Section \ref{sec.did} hold. 

Suppose we have a data set $\{Y_{igt}:i=1,\ldots,n_{g};g=1,\ldots,G;t=0,1\}$. For a fixed $t=0,1$, assume $Y_{igt}$ is independent of $Y_{i'g't}$ for $(i,g)\neq (i',g')$ where $g,g'$ are both in $\mathcal{G}^{I}$ or $\mathcal{G}^{N}$. For each $g,t$ and an arbitrary $\tau_{U}\in (0,1)$, we can estimate $Y_{gt}(\tau_{U})$ by the $\tau_{U}$-th sample quantile of $Y_{igt}$, denoted by $\hat{Y}_{gt}(\tau_{U})$. This estimator is uniformly consistent in $\tau_{U}$ over a compact subset of $(0,1)$ as $n_{g}\to\infty$. We then form $\hat{Q}_{Y_{I0}(\tau_{U})}$, $\hat{Q}_{Y_{N1}(\tau_{U}^{*})}$ and $\hat{F}_{Y_{N0}(\tau_{U})}$ by the sample quantile functions and empirical CDF using variation in $\hat{Y}_{gt}(\tau_{U})$ across $g$ for fixed $t=0,1$. When $G\to\infty$, these estimators are again, uniformly consistent. 

Finally, construct a grid in $(0,1)$ for the $\tau_{U}$s, denoted by $\{\tau_{U,m}\}_{m=1,\ldots,M_{U}}$ where $M_{U}$ is finite and does not need to go to infinity with the sample size. By assuming continuity of $Q_{Y_{N1}(\tau_{U}^{*})}(\cdot)$ and $F_{Y_{N0}(\tau_{U,m})}$ for all $\tau_{U,m}$, the following estimator consistently estimate $Y^{N}_{I1}(\tau_{U}^{*},\tau_{V}^{*})$:
\begin{equation*}\label{eq.did.est}
  \hat{Y}^{N}_{I1}(\tau_{U}^{*},\tau_{V}^{*})=\frac{1}{M_{U}}\sum_{m=1}^{M_{U}}\hat{Q}_{Y_{N1}(\tau_{U}^{*})}\left(\left(\hat{F}_{Y_{N0}(\tau_{U,m})}\left(\hat{Q}_{Y_{I0}(\tau_{U,m})}\left(\tau_{V}^{*}\right)\right)\right)\right).
\end{equation*}

\subsection{Synthetic Control}
Again, suppose we have a data set $\{Y_{igt}:i=1,\ldots,n_{g};g=1,\ldots,G;t=0,1\}$ where $Y_{igt}$ are independently distributed across $i$ for each $g$, and can have mild serial correlation such that the sample quantile functions are consistent. For each $g,t$ and $\tau_{U}$, we construct $\hat{Y}_{gt}(\tau_{U})$ as before by requiring $n_{g}\to\infty$. Then for each $g$, we construct $\hat{Q}_{Y_{g,pre}(\tau_{U})}$ and $\hat{Q}_{Y_{g,post}(\tau_{U})}$ by sample quantiles of $\hat{Y}_{gt}(\tau_{U})$ over $t=1,\ldots,T_{0}$ and $t=T_{0}+1,\ldots,T$, respectively. 

Now, build grids $\{\tau_{V,m}\}_{m=1,\ldots,M_{V}}$. We estimate $\lambda^{*}(\tau_{U})$ by 
\begin{equation*}
  \hat{\lambda}^{*}(\tau_{U})=\arg\min_{\lambda(\tau_{U})\in\Delta^{G-1}}\frac{1}{M_{V}}\sum_{m=1}^{M_{V}}\left(\hat{Q}_{Y_{1,pre}(\tau_{U})}(\tau_{V,m})-\sum_{g=2}^{G}\lambda_{g}(\tau_{U})\hat{Q}_{Y_{g,pre}(\tau_{U})}(\tau_{V,m})\right)^{2}.
\end{equation*}
Note that $\hat{\lambda}^{*}(\tau_{U})$ is the solution to a least-square problem so is easy to compute. Consistency holds when $M_{V}\to\infty$, $T_{0}\to\infty$ and $T-T_{0}\to\infty$.

Finally, estimate $Y^{N}_{1,post}(\tau_{U},\tau_{V})$ by $\sum_{g=2}^{G}\hat{\lambda}^{*}_{g}(\tau_{U})\hat{Q}_{Y_{g,post}(\tau_{U})}(\tau_{V})$.

\begin{appendix}
\section*{Appendix}\label{appn} 
\setcounter{equation}{0}\renewcommand\theequation{A.\arabic{equation}} 
\begin{proof}[Proof of Theorem \ref{thm.did.id}]
Let $k=I$ if $g\in\mathcal{G}^{I}$ and $k=N$ if $g\in\mathcal{G}^{N}$. For any $\tau_{U}\in (0,1)$,
\begin{align*}
\Pr\left(Y_{igt}^{N}\leq h(Q_{U_{k}}(\tau_{U}),V_{gt},t)|V_{gt}\right)\notag=&\Pr\left(h(U_{igt},V_{gt},t)\leq h(Q_{U_{k}}(\tau_{U}),V_{gt},t)|V_{gt}\right)\notag\\
=&\Pr\left(U_{igt}\leq Q_{U_{k}}(\tau_{U})|V_{gt}\right)=\tau_{U},
\end{align*}
where the first equality is by model \eqref{eq.model} and Assumption \ref{assum.did.iid}, the second equality is by Assumption \ref{assum.did.mono}, and the last equality is by Assumption \ref{assum.did.ind}. Hence, $h(Q_{U_{k}}(\tau_{U}),V_{gt},t)$ is the $\tau_{U}$-th quantile of $Y^{N}_{igt}$ conditional on $V_{gt}$. Note that this quantity is equal to the $\tau_{U}$-th quantile of $Y^{N}_{igt}$ in the group $g$ and period $t$, denoted by $Y^{N}_{gt}(\tau_{U})$; it is a random variable whose randomness only comes from $V_{gt}$.

Using a similar argument, we further have
\begin{align}
\Pr\left(Y_{gt}^{N}(\tau_{U})\leq h(Q_{U_{k}}(\tau_{U}),Q_{V_{k}}(\tau_{V}),t)\right)=\Pr (V_{gt}\leq Q_{V_{k}}(\tau_{V}))=\tau_{V},\label{eq.did.doubleinvert}
\end{align}
where the first equality is by Assumption \ref{assum.did.mono}. 

\textit{Case (i)}. Under Assumption \ref{assum.did.iid} and the condition in case (i), let $Q_{U_{I}}=Q_{U_{N}}\equiv Q_{U}$. By the definition of $Y^{N}_{kt}(\tau_{U},\tau_{V})$ ($k=I,N$) and equation \eqref{eq.did.doubleinvert}, we have the following for all $t\in\{0,1\}$:
\begin{equation*}\label{eq.did.YNA}
Y^{N}_{kt}(\tau_{U},\tau_{V})=h(Q_{U}(\tau_{U}),Q_{V_{k}}(\tau_{V}),t),k=I,N.
\end{equation*}
Substituting equation \eqref{eq.potential}, we thus have
\begin{align}
 Q_{Y_{Nt}(\tau_{U})}(\tau_{V})=&h(Q_{U}(\tau_{U}),Q_{V_{N}}(\tau_{V}),t), t=0,1,\label{eq.did.YNB}\\
Q_{Y_{I0}(\tau_{U})}(\tau_{V})=&h(Q_{U}(\tau_{U}),Q_{V_{I}}(\tau_{V}),0). \label{eq.did.YNB2}
\end{align}

By Assumption \ref{assum.did.support}, there exists a $\tau_{V}'$ for the fixed $\tau_{V}^{*}$ such that $Q_{V_{I}}(\tau_{V}^{*})=Q_{V_{N}}(\tau_{V}')$. By monotonicity of $h$ in its second argument and by equations \eqref{eq.did.YNB} and \eqref{eq.did.YNB2}, $\tau_{V}'$ satisfies $Q_{Y_{I0}(\tau_{U})}(\tau_{V}^{*})=Q_{Y_{N0}(\tau_{U})}(\tau_{V}')$ for any $\tau_{U}$, or equivalently, $\tau_{V}'=F_{Y_{N0}(\tau_{U})}(Q_{Y_{I0}(\tau_{U})}(\tau_{V}^{*}))$. Therefore, for all $\tau_{U}\in (0,1)$,
\begin{align*}
  Y_{I1}^{N}(\tau_{U}^{*},&\tau_{V}^{*})=h(Q_{U}(\tau_{U}^{*}),Q_{V_{I}}(\tau_{V}^{*}),1)=h(Q_{U}(\tau_{U}^{*}),Q_{V_{N}}(\tau_{V}'),1)\\
  =&Q_{Y_{N1}(\tau_{U}^{*})}(\tau_{V}')=Q_{Y_{N1}(\tau_{U}^{*})}(F_{Y_{N0}(\tau_{U})}(Q_{Y_{I0}(\tau_{U})}(\tau_{V}^{*}))).
\end{align*}

\textit{Case (ii)}. Let $Q_{V_{I}}=Q_{V_{N}}\equiv Q_{V}$. Then similar to equations \eqref{eq.did.YNB} and \eqref{eq.did.YNB2}, we have
\begin{align}
 Q_{Y_{Nt}(\tau_{U})}(\tau_{V})=&h(Q_{U_{N}}(\tau_{U}),Q_{V}(\tau_{V}),t), t=0,1,\label{eq.did.YNB'}\\
Q_{Y_{I0}(\tau_{U})}(\tau_{V})=&h(Q_{U_{I}}(\tau_{U}),Q_{V}(\tau_{V}),0).\label{eq.did.YNB'2}
\end{align}

By Assumption \ref{assum.did.support}, there exists $\tau_{U}'$ for the fixed $\tau_{U}^{*}$ such that $Q_{U_{I}}(\tau_{U}^{*})=Q_{U_{N}}(\tau_{U}')$. By monotonicity of $h$ in its first argument and by equations \eqref{eq.did.YNB'} and \eqref{eq.did.YNB'2}, $\tau_{U}'$ can be found by $Q_{Y_{I0}(\tau_{U}^{*})}(\tau_{V})=Q_{Y_{N0}(\tau_{U}')}(\tau_{V})$ for any $\tau_{V}$. Therefore,
\begin{align*}
  Y_{I1}^{N}(\tau_{U}^{*},\tau_{V}^{*})=h(Q_{U_{I}}(\tau_{U}^{*}),Q_{V}(\tau_{V}^{*}),1)=h(Q_{U_{N}}(\tau_{U}'),Q_{V}(\tau_{V}^{*}),1)=Q_{Y_{N1}(\tau_{U}')}(\tau_{V}^{*}).
\end{align*}

\textit{Case (iii)}. In view of the monotonicity of $h$ and $\gamma$, we have the following equations:
\begin{equation*}\label{eq.did.YNB''}
\begin{aligned}
 &Q_{Y_{Nt}(\tau_{U})}(\tau_{V})=h(\gamma(Q_{U_{N}}(\tau_{U}),Q_{V_{N}}(\tau_{V})),t), t=0,1,\\
 &Q_{Y_{I0}(\tau_{U})}(\tau_{V})=h(\gamma(Q_{U_{I}}(\tau_{U}),Q_{V_{I}}(\tau_{V})),0).
 \end{aligned}
\end{equation*}
Therefore, by strict monotonicity of $h(\cdot,t)$, $Q_{Y_{I0}(\tau_{U}^{*})}(\tau_{V}^{*})=Q_{Y_{N0}(\tau_{U}')}(\tau_{V}')$ if and only if $\gamma(Q_{U_{N}}(\tau_{U}'), \allowbreak Q_{V_{N}}(\tau_{V}'))=\gamma(Q_{U_{I}}(\tau_{U}^{*}),Q_{V_{I}}(\tau_{V}^{*}))$; existence of such $(\tau_{U}',\tau_{V}')$ is guaranteed by Assumption \ref{assum.did.support}. The desired result obtains since $\gamma$ does not change in $t$.
\end{proof}
\begin{proof}[Proof of Theorem \ref{thm.did.test}]
We first prove part (i). The ``only if'' part follows Theorem \ref{thm.did.id} (i). Now we show the ``if'' part. By Assumptions \ref{assum.did.mono}-\ref{assum.did.iid} and equation \eqref{eq.potential}, we have
\begin{equation}\label{eq.did.implication}
Q_{Y_{k0}(\tau_{U})}(\tau_{V})=h(Q_{U_{k}}(\tau_{U}),Q_{V_{k}}(\tau_{V}),0),k=I,N.
\end{equation}
The condition $Q_{Y_{N0}(\tau_{U})}(\tau_{V}')=Q_{Y_{I0}(\tau_{U})}(\tau_{V})$ and equation \eqref{eq.did.implication} imply that 
\begin{equation}\label{eq.did.implication2}
    h(Q_{U_{N}}(\tau_{U}),Q_{V_{N}}(\tau_{V}'),0)=h(Q_{U_{I}}(\tau_{U}),Q_{V_{I}}(\tau_{V}),0)
\end{equation}
for all $\tau_{U}\in[0,1]$ because $\tau_{V}'$ does not depend on $\tau_{U}$. We now prove $Q_{V_{I}}(\tau_{V})=Q_{V_{N}}(\tau_{V}')$. Suppose $Q_{V_{I}}(\tau_{V})>Q_{V_{N}}(\tau_{V}')$. Since $h$ is strictly increasing in the first two arguments by Assumption \ref{assum.did.mono}, equation \eqref{eq.did.implication2} implies that $Q_{U_{I}}(\tau_{U})<Q_{U_{N}}(\tau_{U})$ for all $u\in [0,1]$. However, by Assumptions \ref{assum.did.support} and \ref{assum.did.compact}, there must exist a $\tau_{U}^{0}\in [0,1]$ such that $Q_{U_{I}}(\tau_{U}^{0})=Q_{U_{N}}(\tau_{U}^{0})$, a contradiction. We can similarly rule out the case of $Q_{V_{I}}(\tau_{V})<Q_{V_{N}}(\tau_{V}')$. Therefore, $Q_{V_{I}}(\tau_{V})=Q_{V_{N}}(\tau_{V}')$. Equation \eqref{eq.did.implication2} thus implies $h(Q_{U_{I}}(\tau_{U}),Q_{V_{I}}(\tau_{V}),0)=h(Q_{U_{N}}(\tau_{U}),Q_{V_{I}}(\tau_{V}),0)$ for all $\tau_{U}\in [0,1]$. It then has to be the case that $Q_{U_{I}}(\cdot)=Q_{U_{N}}(\cdot)$ on $[0,1]$ since $h$ is strictly increasing in the first argument.

The proof of part (ii) follows a very similar argument, so is omitted.

Part (iii). The ``if'' part. For any fixed $(\tau_{U}^{*},\tau_{V}^{*})\in [0,1]^{2}$, Assumption \ref{assum.did.support} implies that there exist $(\tau_{U},\tau_{V})\in [0,1]^{2}$ such that $Q_{U_{I}}(\tau_{U}^{*})=Q_{U_{N}}(\tau_{U})$ and $Q_{V_{I}}(\tau_{U}^{*})=Q_{V_{N}}(\tau_{U})$. So 
\begin{equation}\label{eq.test.3}
Q_{Y^{N}_{It}(\tau_{U}^{*})}(\tau_{V}^{*})=Q_{Y_{Nt}(\tau_{U})}(\tau_{V}),t=0,1.
\end{equation}
Then for any $(\tau_{U}',\tau_{V}')$ such that $Q_{Y_{N0}(\tau_{U})}(\tau_{V})=Q_{Y_{N0}(\tau_{U}')}(\tau_{V}')$, the condition in Theorem \ref{thm.did.test}-(iii) implies that $Q_{Y_{N1}(\tau_{U})}(\tau_{V})=Q_{Y_{N1}(\tau_{U}')}(\tau_{V}')$. Hence, equation \eqref{eq.test.3} implies that  $Y^{N}_{I1}(\tau_{U}^{*},\tau_{V}^{*})\equiv Q_{Y^{N}_{I1}(\tau_{U}^{*})}(\tau_{V}^{*})=Q_{Y_{N1}(\tau_{U}')}(\tau_{V}')$.

The ``only if'' part under $S(U_{I})=S(U_{N})$ and $S(V_{I})=S(V_{N})$. Suppose not. Then there exist some $(\tau_{U},\tau_{V})\in [0,1]^{2}$ and $(\tau_{U}',\tau_{V}')\in [0,1]^{2}$ such that $Q_{Y_{N0}(\tau_{U})}(\tau_{V})=Q_{Y_{N0}(\tau_{U}')}(\tau_{V}')$ but $Q_{Y_{N1}(\tau_{U})}(\tau_{V})\neq Q_{Y_{N1}(\tau_{U}')}(\tau_{V}')$. By $S(U_{I})=S(U_{N})$ and $S(V_{I})=S(V_{N})$, there exist $(\tau_{U}^{*},\tau_{V}^{*})\in [0,1]^{2}$ such that $Q_{U_{I}}(\tau_{U}^{*})=Q_{U_{N}}(\tau_{U})$ and $Q_{V_{I}}(\tau_{V}^{*})=Q_{V_{N}}(\tau_{V})$. Then we have 
$$Q_{Y_{I0}(\tau_{U}^{*})}(\tau_{V}^{*})=Q_{Y_{N0}(\tau_{U})}(\tau_{V})=Q_{Y_{N0}(\tau_{U}')}(\tau_{V}'),$$
but on the other hand, $Y^{N}_{I1}(\tau_{U}^{*},\tau_{V}^{*})\equiv Q_{Y^{N}_{I1}(\tau_{U}^{*})}(\tau_{V}^{*})=Q_{Y_{N1}(\tau_{U})}(\tau_{V})\neq Q_{Y_{N1}(\tau_{U}')}(\tau_{V}').$
We have thus found a pair $(\tau_{U}^{*},\tau_{V}^{*})\in [0,1]^{2}$ such that Theorem \ref{thm.did.id}-(iii) does not hold, a contradiction.
\end{proof}
\begin{proof}[Proof of Theorem \ref{thm.sc.id}]
Under Assumptions \ref{assum.did.ind}, \ref{assum.sc.mono}, \ref{assum.sc.iid} and equation \eqref{eq.potential}, we have $Y_{gt}(\tau_{U})=\tilde{h}(V_{gt},pre,\tau_{U})$ if $t\leq T_{0}$ for all $g$, and $Y_{gt}(\tau_{U})=\tilde{h}(V_{gt},post,\tau_{U})$ of $t>T_{0}$ and $g>1$. By isometry of $\tilde{h}$ in $j\in\{pre,post\}$ and by the requirement on the maps $V_{gt}\mapsto V_{gt'}$, the weights obtained at $j=pre$ also holds for $j=post$ following \cite{gunsilius2023distributional}.
\end{proof}
\end{appendix}
\newpage

\bibliographystyle{chicago}
\bibliography{references.bib} 
\end{document}